\documentclass[11pt]{article}
\usepackage[margin=1in]{geometry}
\usepackage{amsmath,amssymb,amsthm,mathtools}
\usepackage{booktabs,array}
\usepackage{algorithm}
\usepackage{algpseudocode}
\usepackage{enumitem}
\usepackage{microtype}
\usepackage[hidelinks]{hyperref}

\newtheorem{theorem}{Theorem}
\newtheorem{lemma}[theorem]{Lemma}
\newtheorem{proposition}[theorem]{Proposition}
\newtheorem{corollary}[theorem]{Corollary}
\newcommand{\OPT}{\mathrm{OPT}}
\newcommand{\ALG}{\mathrm{ALG}}
\newcommand{\supp}{\operatorname{supp}}

\hypersetup{
  pdftitle={Cardinality-Constrained Randomized Assortments with Balanced Market Share},
  pdfauthor={Xiaotie Deng, Hanyu Li, Chenghua Liu}
}

\title{Cardinality-Constrained Randomized Assortments\\with Balanced Market Share}
\author{Xiaotie Deng\textsuperscript{1,2}\thanks{\href{mailto:csdeng@cityu.edu.hk}{\texttt{csdeng@cityu.edu.hk}}}\quad
Hanyu Li\textsuperscript{2}\thanks{\href{mailto:lhydave@pku.edu.cn}{\texttt{lhydave@pku.edu.cn}}}\quad
Chenghua Liu\textsuperscript{3}\thanks{\href{mailto:liuch.russell@gmail.com}{\texttt{liuch.russell@gmail.com}}}\\[0.5em]
\textsuperscript{1}Department of Computer Science, City University of Hong Kong\\[-0.1em]
\textsuperscript{2}CFCS, School of Computer Science, Peking University\\[-0.1em]
\textsuperscript{3}Institute of Software, Chinese Academy of Sciences}
\date{}

\begin{document}
\maketitle

\begin{abstract}
Assortment optimization asks a seller which of $n$ products to display to each customer. Under the multinomial-logit (MNL) model, a revenue-maximizing policy may concentrate most purchases on only a few products. Balanced market share (BMS) limits this disparity by requiring every product to receive either zero aggregate sales or at least an $\alpha$-fraction of the largest product's sales. We study randomized BMS under a per-assortment cardinality constraint: every displayed assortment contains at most $K$ products, although the seller may rotate a larger catalog across customers. This couples per-realization feasibility with aggregate balance over an exponential policy space. We resolve the model on all three algorithmic fronts. First, reweighting assortments by their MNL denominators turns the cardinality restriction into a single uniform-matroid rank inequality, yielding an exact compact sales-space formulation in which every feasible sales vector can be implemented by a policy supported on $O(n)$ assortments. Second, deciding whether there exists a feasible policy attaining a given revenue target is NP-complete even for $\alpha=1$ and $K=2$, whereas optimization is polynomial-time solvable once the set of products with positive sales is prescribed. Third, despite this hardness, BMS confines all positive sales to one multiplicative band, and a geometric scale search combined with a standard multiple-choice-knapsack dynamic program yields an FPTAS. For every $\varepsilon\in(0,1)$, it achieves at least a $(1-\varepsilon)$ fraction of the optimal expected revenue, preserves BMS and per-assortment cardinality exactly, and outputs a rational policy supported on $O(n)$ assortments.
\end{abstract}

\section{Introduction}\label{sec:intro}

Assortment optimization asks a seller to choose which products to offer to a customer.  It is a basic problem in revenue management and appears whenever an online retailer, marketplace, or advertising platform selects a menu from a larger catalog.  We consider the standard multinomial-logit (MNL) model with product set $N=[n]=\{1,\ldots,n\}$.  Product $i$ has revenue $r_i>0$ and attractiveness $v_i>0$.  If a customer is shown an assortment $S\subseteq N$, then she purchases product $i\in S$ with probability
\[
 \phi_i(S)=\frac{v_i}{1+\sum_{j\in S}v_j};
\]
the term $1$ represents the no-purchase option.  The seller may vary the displayed assortment across customers, so a policy is a distribution $q=(q_S)$ over assortments.  Its aggregate purchase probability and expected revenue are
\[
 x_i=\sum_Sq_S\phi_i(S),
 \qquad
 R(q)=\sum_i r_ix_i.
\]

Optimizing revenue alone can create highly uneven sales.  A few products may capture most of the demand, while other products that remain active sell only rarely.  El Housni, Feng, and Topaloglu~\cite{EFTopaloglu} introduced \emph{balanced market share} (BMS) to control this disparity through one relative-balance parameter.  A randomized policy is BMS-feasible when
\[
 x_i=0
 \quad\text{or}\quad
 x_i\ge\alpha\max_jx_j
 \qquad\text{for every }i.\qquad\text{(BMS)}
\]
where $\alpha\in(0,1]$.  The seller remains free to decide which products are active.  Once a product has positive sales, however, its purchase probability must be at least an $\alpha$ fraction of the largest one.  Thus $\alpha=1$ requires equal aggregate sales among active products, while smaller values permit more disparity.  The constraint concerns purchases aggregated across customers; it does not prescribe display frequencies or require every product to participate.

In the unrestricted static model of~\cite{EFTopaloglu}, the seller may randomize over arbitrary assortments.  The authors prove that this problem is exactly solvable in polynomial time.  Some optimal active set is characterized by one threshold on revenue and another on attractiveness, and the corresponding aggregate sales can be implemented by a distribution over at most $n$ nested assortments.  They also study additional restrictions on the \emph{active catalog}
\[
 A(q)=\bigcup_{S:q_S>0}S,
\]
the set of products that appear with positive probability anywhere in the policy.

Operational constraints often apply to each display instead.  A recommendation panel or physical shelf may contain at most $K$ products for any one customer, while the seller can rotate a larger collection over time.  The resulting restriction is
\[
 q_S>0\quad\Longrightarrow\quad |S|\le K.\qquad\text{(C)}
\]
This condition is strictly weaker than $|A(q)|\le K$.  For example, when $K=2$, a seller may randomize over $\{1,2\}$, $\{2,3\}$, and $\{3,4\}$.  Every customer sees at most two products, although four products participate.  El Housni, Feng, and Topaloglu explicitly left BMS with cardinality or capacity constraints on every randomized assortment as an open direction~\cite{EFTopaloglu}.

This distinction changes both the geometry and the complexity of the problem.  There are exponentially many assortments satisfying (C), while BMS is enforced only after their purchase probabilities are aggregated.  The sales vector $x$ does not by itself reveal whether it admits an implementation using only small assortments, and the induced rank constraint need not preserve the two-threshold geometry of unrestricted BMS.  We resolve the cardinality-constrained open direction on all three algorithmic fronts: we characterize attainable aggregate sales exactly, establish the exact complexity of revenue maximization, and construct an implementable fully polynomial-time approximation scheme (FPTAS) despite NP-hardness.

We give a complete resolution.  The first theorem shows that the exponential policy space has an exact compact image.

\begin{theorem}[Exact sales-space formulation]\label{thm:sales}
The cardinality-constrained randomized BMS problem is equivalent, preserving feasibility and objective value, to the following problem in rescaled-sales variables $w\in\mathbb R_+^n$:
\begin{align}
 \max\quad &F(w):=\frac{\sum_i r_iw_i}{1+\sum_iw_i} \label{eq:obj}\\
 \text{s.t.}\quad &0\le w_i\le v_i &&(i\in N),\label{eq:caps}\\
 &\sum_i\frac{w_i}{v_i}\le K,\label{eq:rank}\\
 &w_i=0\ \text{or}\ w_i\ge\alpha\max_jw_j &&(i\in N).\label{eq:bmsw}
\end{align}
Every feasible rational $w$ can be implemented by a rational distribution over $O(n)$ assortments of size at most $K$.
\end{theorem}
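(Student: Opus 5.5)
Given a policy $q$, I will use the rescaling suggested by the abstract: weight each assortment by its MNL denominator. Set $D_S=1+\sum_{j\in S}v_j$, $\lambda_S=q_S/D_S$, and $\Lambda=\sum_S\lambda_S$. Define
\[
 w_i=\frac{v_i\sum_{S\ni i}\lambda_S}{\lambda_0},
\]
where $\lambda_0=\sum_S q_S/D_S=\Lambda$ is the no-purchase probability. Then $x_i=\lambda_0 w_i$ and $\sum_S q_S=1$. The identity $1=\sum_S\lambda_S D_S=\lambda_0\bigl(1+\sum_i w_i\bigr)$ gives $\lambda_0=1/(1+\sum_i w_i)$. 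Hence $R(q)=\sum_i r_i w_i/(1+\sum_i w_i)=F(w)$.

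The constraints then follow directly. BMS is scale invariant, so \eqref{eq:bmsw} holds because $x=\lambda_0 w$. Let $p_S=\lambda_S/\lambda_0$, a probability distribution over assortments. Then $w_i/v_i=\sum_{S\ni i}p_S\le 1$, which is \eqref{eq:caps}. Also
\[
 \sum_i\frac{w_i}{v_i}=\sum_S p_S|S|\le K,
\]
which is \eqref{eq:rank}. So every feasible policy yields a feasible $w$ with the same objective value.

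For the converse, take a feasible $w$ and set $y_i=w_i/v_i\in[0,1]$ with $\sum_i y_i\le K$. I need a distribution $p$ over sets of size at most $K$ whose marginals are exactly $y$, i.e.\ $\Pr[i\in S]=y_i$. This is the step I expect to carry the weight. The set $\{y\in[0,1]^n:\sum_i y_i\le K\}$ is the independence polytope of the uniform matroid of rank $K$, and its vertices are indicator vectors of sets of size at most $K$.

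Rather than invoking Carathéodory abstractly, I will give an explicit rational decomposition with $O(n)$ sets via a systematic-sampling construction. Lay the intervals of lengths $y_1,\dots,y_n$ consecutively on $[0,\sum_i y_i]$ and wrap them modulo $1$ onto $K$ stacked unit circles. Equivalently, for $u\in[0,1)$ let $S(u)=\{i:\text{interval } i \text{ contains a point } u+k,\ k\in\mathbb Z\}$. Since each $y_i\le1$, item $i$ appears for a $u$-measure of exactly $y_i$. Since the total length is at most $K$, $|S(u)|\le K$. The function $S(u)$ is piecewise constant, with breakpoints at the fractional parts of the $n+1$ prefix sums, so it takes at most $n+1$ distinct values. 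Their probabilities are differences of rational prefix sums, hence rational. This gives the support bound $O(n)$.

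Finally, I convert back to a policy. Set $\lambda_S\propto p_S$ with $\lambda_0=1/(1+\sum_i w_i)$, and put $q_S=\lambda_0 p_S D_S$. I must check that $\sum_S q_S=1$. We have
\[
 \sum_S p_S D_S=1+\sum_S p_S\sum_{j\in S}v_j=1+\sum_j v_j y_j=1+\sum_j w_j,
\]
so $\sum_S q_S=1$ indeed. The empty set may appear in the support, which is fine. Then $x_i=q$-aggregate $=\lambda_0 v_i y_i=\lambda_0 w_i$, so BMS and the revenue carry over. Rationality is preserved throughout because every operation is a field operation on rationals.
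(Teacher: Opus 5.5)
Your proposal is correct and follows the paper's proof essentially step for step: the same denominator tilt $p_S=q_S/(d(S)x_0)$ that makes $w_i/v_i$ the inclusion marginals of a distribution on sets of size at most $K$, the same inverse map $q_S=p_S\,d(S)/(1+\sum_i w_i)$, and the same wrap-around interval (systematic-sampling) decomposition yielding $O(n)$ rational support points. The only difference is cosmetic: you bound $|S(u)|$ by counting the at most $K$ points of $u+\mathbb Z$ in $[0,\sum_i y_i)$ directly, whereas the paper first pads with dummy coordinates to total length $K$. You should also take the intervals half-open, so that item $i$ has $u$-measure exactly $y_i$ and $|S(u)|\le K$ holds for every $u$ rather than only for almost every $u$.
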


Compact representability does not preserve the exact tractability of unrestricted BMS.

\begin{theorem}[Exact complexity]\label{thm:hardness}
Given a rational revenue target $H$, deciding whether there exists a feasible policy with expected revenue at least $H$ is NP-complete, even when $\alpha=1$ and $K=2$ are fixed.
\end{theorem}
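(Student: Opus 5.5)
By Theorem~\ref{thm:sales}, the decision problem is equivalent to asking whether the rescaled-sales program \eqref{eq:obj}--\eqref{eq:bmsw} has a feasible $w$ with $F(w)\ge H$. The plan is to get membership in NP from the structure of an optimal $w$ once its support is fixed, and hardness by a reduction from \textsc{Subset Sum} that exploits the interaction between the caps \eqref{eq:caps} and the rank inequality \eqref{eq:rank}.

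\emph{Membership in NP.} The certificate is the active set $T=\supp(w)$. For fixed $T$, I would show that an optimal feasible $w$ is computable in polynomial time. For $\alpha=1$ this is immediate: $w_i=t$ on $T$, and
\[
F=\frac{tR_T}{1+|T|t},\qquad R_T=\sum_{i\in T}r_i,
\]
which is increasing in $t$. Hence the optimum is
\[
t^\ast=\min\Bigl(\min_{i\in T}v_i,\;K\big/\textstyle\sum_{i\in T}v_i^{-1}\Bigr),
\]
which is rational of polynomial bit size. For general $\alpha$, I would appeal to the polynomial-time solvability of the problem with a prescribed support announced in the abstract. In fact this is a linear-fractional program over a polytope, because $w_i\ge\alpha\max_j w_j$ on $T$ is the linear system $w_i\ge\alpha w_j$. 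The Charnes--Cooper transformation turns it into an LP. The verifier then compares the optimal value with $H$. Theorem~\ref{thm:sales} converts the resulting $w$ back into a policy, so a witness policy exists exactly when the certificate succeeds.

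\emph{Hardness for $\alpha=1$, $K=2$.} Write $a_i=1/v_i$ and $A_T=\sum_{i\in T}a_i$. By the computation above, the best value with support $T$ is
\[
\frac{R_T}{|T|+\max(\max_{i\in T}a_i,\;A_T/2)}.
\]
So $\mathrm{val}(T)\ge H$ if and only if
\[
\sum_{i\in T}(r_i-H)\;\ge\;H\max\bigl(\max_{i\in T}a_i,\;A_T/2\bigr).
\]
Without the max this would be an easy "take all positive terms" test, so the hardness must come from the kink where the cap $t\le v_i$ and the rank bound swap roles.

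Given a \textsc{Subset Sum} instance with positive integers $s_1,\dots,s_m$ and target $\tau$, with all $s_j\le\tau$, I would use the following products:
\begin{itemize}[nosep]
\item item products with $a_j=s_j$ and $r_j=H(1+\lambda s_j)$, where $\lambda=\tfrac12-\delta$ for a small rational $\delta>0$;
\item one special product $0$ with $a_0=\tau$ and $r_0=H(1+\rho)$, where $\rho$ is chosen below.
\end{itemize}
Fix $H=1$. For $T=\{0\}\cup S$ with $\sigma=\sum_{j\in S}s_j$, the largest $a$ is $a_0$, so the condition becomes $\rho+\lambda\sigma\ge\max(\tau,(\tau+\sigma)/2)$.

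\emph{Case $\sigma\le\tau$.} The condition reads $\sigma\ge(\tau-\rho)/\lambda$.

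\emph{Case $\sigma>\tau$.} The condition reads $\rho-\tau/2\ge\delta\sigma$.

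I would choose $\rho$ so that $(\tau-\rho)/\lambda\in(\tau-1,\tau]$, i.e.\ $\rho\approx\tau/2+\delta\tau$, placed just so. I would also take $\delta<1/(2\sum_j s_j)$-ish. Then $\rho-\tau/2<\delta(\tau+1)$, which rules out every integer $\sigma>\tau$. The net effect is that the only feasible integer value is $\sigma=\tau$.

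Supports without product $0$ must fail. For them the left side is $\lambda\sigma<\sigma/2\le$ the right side. This covers singletons as well, since then the right side is $\max a_i\ge A_T/2$.

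All numbers are rational with polynomially many bits, and $r_i,v_i>0$. Hence a target-$H$ policy exists if and only if the \textsc{Subset Sum} instance is a yes-instance.

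\emph{Main obstacle.} The delicate step is calibrating $\rho$ and $\delta$. The window in $\sigma$ must be exactly $\{\tau\}$ among integers, while the regime $\sigma>\tau$, where the rank constraint binds, must be infeasible. I would first fix $\delta=1/(4(\tau+\sum_j s_j+1))$ and then solve for $\rho$ from the two boundary inequalities, checking that the interval of admissible $\rho$ is nonempty. If it is empty, I would perturb $r_0$ and $a_0$ by $\pm\delta$ to open it.
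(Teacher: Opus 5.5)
Your proposal is correct and is essentially the paper's proof: NP membership by guessing the active set and solving the resulting linear (Charnes--Cooper) system, and hardness via Subset Sum with an anchor product whose cap binds when $\sigma\le\tau$ and whose rank constraint binds when $\sigma>\tau$. Indeed, the paper's instance ($v_\star=1$, $r_\star=5B$, $v_i=B/a_i$, $r_i=3B+a_i$, $H=3B$), after normalizing by $H$ and rescaling by $B$, is exactly your construction with $\delta=1/6$ and $\rho=2\tau/3$; your calibration worry is moot, because $\rho=\tau(1/2+\delta)$ turns the two cases into $\sigma\ge\tau$ and $\sigma\le\tau$ for every $\delta\in(0,1/2)$, and the only addition needed is the paper's fixed no-instance when every item exceeds the target, since otherwise $n=1<K$.
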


Despite this hardness, the remaining structure supports arbitrarily accurate approximation without relaxing either feasibility requirement.

\begin{theorem}[Exact-feasibility FPTAS]\label{thm:fptas}
For every rational $\varepsilon\in(0,1)$, there is an algorithm polynomial in the input length and $1/\varepsilon$ that returns at least a $(1-\varepsilon)$ fraction of the optimal expected revenue, satisfies BMS and per-assortment cardinality exactly, and outputs a rational distribution supported on $O(n)$ assortments.
\end{theorem}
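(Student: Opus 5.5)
We reduce to the compact problem of Theorem~\ref{thm:sales}. There it suffices to find a feasible rational $w$ with $F(w)\ge(1-\varepsilon)\OPT$; the implementation part of Theorem~\ref{thm:sales} then converts $w$ into a rational policy on $O(n)$ assortments of size at most $K$.

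\textbf{Structure of an optimum.} Let $w^*$ be optimal, with active set $T$ and maximum $M=\max_j w^*_j$. By \eqref{eq:bmsw}, every $i\in T$ has $w^*_i\in[\alpha M,M]$, so all positive values lie in one multiplicative band of width $1/\alpha$. We also have $M\le \max_i v_i$. The scale $M$ need not be one of polynomially many values, so we guess it. Let $\delta=\varepsilon/c$ for a constant $c$, and consider the geometric grid of scales $t$ of the form $v_j(1+\delta)^{-\ell}$. To control how small $M$ can be, we note that $M\ge \alpha\min_i v_i/(\text{poly})$ can be forced without loss: any active $i$ can have $w_i$ scaled up uniformly until some cap binds. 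Uniform scaling preserves BMS, and increasing all active $w_i$ by a common factor can only help when the ratio $F$ is below the relevant revenues. So I would first argue that some optimum has either a tight cap or a tight rank constraint; this bounds $M/\min v$ by polynomial-size numbers and the grid has $O(n\log(\cdot)/\delta)$ points.

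\textbf{Knapsack for a fixed scale.} Fix $t$ with $t\le M\le (1+\delta)t$. Restrict each $w_i$ to the discrete set $\{0\}\cup\{\alpha t(1+\delta)^k\}$ intersected with $[\alpha t,\,t]$ and $[0,v_i]$. Every value then satisfies BMS relative to any maximum in $[\alpha t,t]$. The remaining concern is that the maximum realized value may be below $t$, and hence ratios below $\alpha$ could appear; to handle this, force the top value $t$ to be used by at least one product, or equivalently shrink the band slightly. Rounding each coordinate of $w^*$ down to the grid loses at most a factor $(1+\delta)^2$ in each coordinate, keeps \eqref{eq:caps} and \eqref{eq:rank}, and keeps BMS. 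Because $F$ is a ratio, we cannot maximize it directly with a dynamic program. Instead we use the standard parametric trick: $F(w)\ge\lambda$ iff $\sum_i(r_i-\lambda)w_i\ge\lambda$. Alternatively we keep $D=\sum_i w_i$ discretized: its value lies in $[\alpha t,nt]$, so we round each chosen $w_i$ to multiples of $\delta t/n$, which puts $D$ in polynomially many buckets. We then run a multiple-choice knapsack DP over products, one choice per product, with state (rounded $D$, whether $t$ is used). It minimizes the rank usage $\sum w_i/v_i$ subject to revenue numerator at least a target, where the targets are also discretized geometrically. Alternatively, the state can be the rounded rank usage, maximizing the numerator for each rounded $D$. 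The rank costs $w_i/v_i$ are real and unrounded, and feasibility must be exact, so we minimize the rank as the DP value. That way \eqref{eq:rank} is checked exactly, while only the objective quantities are rounded.

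\textbf{Main obstacle.} The delicate step is keeping feasibility exact while rounding. Rank and caps are handled by rounding values down, and BMS is handled by the fixed band with an anchored maximum. The other difficulty is proving that the loss in $F$ is only $1-O(\delta)$. Rounding down each $w_i$ by a factor $(1+\delta)^{-1}$ scales the numerator and denominator similarly. More precisely, $F(w/(1+\delta))\ge F(w)/(1+\delta)$ since $F(\gamma w)\ge\gamma F(w)$ for $\gamma\le1$, and coordinatewise rounding can be sandwiched between $w/(1+\delta)$ and $w$. This sandwiching is itself tricky, because decreasing low-revenue coordinates may increase $F$. I would handle it by comparing numerator and denominator separately, using $(1-\delta)N/D\le N'/D'$. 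The number of scales and DP states is polynomial in $n$, $1/\varepsilon$ and the bit length, and taking the best candidate over all scales gives the FPTAS.
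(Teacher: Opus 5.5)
Your outline follows the paper's architecture. A tight cap or rank constraint gives $\max_i w^\star_i\ge v_{\min}/n$. You then use a geometric scale grid, confine all positive values to one band and round them down, linearize via $\sum_i(r_i-\lambda)w_i\ge\lambda$, and run an MCKP DP whose value is the exact minimum rank weight. Your sandwich concern resolves exactly as you suggest: since $r_i>0$, $w/(1+\delta)^2\le w'\le w$ gives numerator at least $N/(1+\delta)^2$ and denominator at most $1+D$.

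One misconception: anchoring the top value is not needed for BMS. If every positive coordinate lies in $[\alpha t,t]$, each is at least $\alpha t\ge\alpha\max_j w_j$ whatever the realized maximum is. The paper uses the band $[\tau,\tau/\alpha]$ with no flag. In your $D$-bucket variant the flag does have a real use: it guarantees $D\ge t$, so per-term rounding to multiples of $\delta t/n$ costs only a relative $\delta$. But then $t$ must be a grid point, which $\{\alpha t(1+\delta)^k\}$ need not contain.

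The genuine gap is the last step. You never fix a polynomial DP whose output provably reaches $(1-O(\delta))F_D$, where $F_D$ is the discrete optimum.

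\emph{The $D$-bucket route.} As written, the DP has state (rounded $D$, flag) and value minimum rank. It cannot impose ``numerator at least a target'', since the numerator is not tracked. It needs an additional rounded, capped numerator coordinate relative to a guessed target.

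\emph{The parametric route.} Maximizing $\sum_i(r_i-\lambda)w_i$ under the rank budget is itself a knapsack, solvable only to within $(1-\delta)$ by profit scaling. At $\lambda=F_D$ its optimum equals exactly $\lambda$, so the approximate test can fail near the target. The missing ingredient is the slack identity: for $w^D$ attaining $F_D$ and any $\rho\le F_D$,
\[
\sum_i(r_i-\rho)w^D_i=F_D+(F_D-\rho)\sum_iw^D_i\ge F_D .
\]
Let $L$ be the best singleton revenue, and take the largest $\rho\le F_D/(1+2\delta)$ in a geometric grid on $[L,r_{\max}]$. If $F_D/(1+2\delta)<L$, the best singleton already suffices as a fallback. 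Otherwise the optimum transformed profit is at least $(1+2\delta)\rho$, and the profit-scaled DP certifies at least $(1-\delta)(1+2\delta)\rho\ge\rho$.

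That DP guarantee itself needs two further facts:
\begin{enumerate}
\item Options with $r_i\le\rho$ must be discarded. This is harmless, since it keeps the band and only raises transformed profit.
\item Every single option must be feasible, which holds because $u/v_i\le 1\le K$. Then the MCKP optimum is at least $\pi_{\max}$, and the rounding loss $\delta\pi_{\max}$ is relative. A forced anchor would break this unless you also guess the anchor product.
\end{enumerate}
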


The proof of Theorem~\ref{thm:sales} uses a denominator tilt. Reweighting each assortment by the inverse of its MNL denominator produces a tilted distribution $p=(p_S)$ over assortments under which the rescaled sales are inclusion marginals:
\[
 \frac{w_i}{v_i}=\sum_{S:i\in S}p_S=\Pr_{S\sim p}(i\in S).
\]
Because every realized assortment has size at most $K$, these marginals lie in the independence polytope of a rank-$K$ uniform matroid. Conversely, every point of this polytope is a convex combination of sets of size at most $K$. A circular-interval decomposition uses only $O(n)$ such sets, and reversing the tilt recovers an MNL policy with the prescribed aggregate sales. This is the exact bridge from the exponential policy space to the compact formulation.

The compact formulation yields a sharp tractability boundary. Once the active set $A=\supp(w):=\{i\in N:w_i>0\}$, equivalently $A(q)$, is fixed, the BMS disjunction becomes linear, and a Charnes--Cooper transformation solves the remaining problem in polynomial time. The unrestricted model handles this combinatorial choice through a two-threshold characterization~\cite{EFTopaloglu}, but the cardinality-induced rank constraint destroys that geometry: even a three-product instance can have a unique optimal support that is not described by revenue and attractiveness thresholds. Our reduction from Subset Sum~\cite{GareyJohnson} proves that optimizing over $A$ is NP-hard already for $\alpha=1$ and $K=2$, and that the associated revenue-threshold decision problem is NP-complete. Balance forces all selected coordinates to share one scale, while the rank and revenue conditions encode the subset-sum target. Thus the computational hardness is concentrated in endogenous support selection: representation and prescribed-active-set optimization remain polynomial-time solvable.

The FPTAS exploits the same common-scale structure. In every nonzero BMS-feasible solution, all positive coordinates lie in one band $[\tau,\tau/\alpha]$, where $\tau:=\alpha\max_iw_i>0$. We search a geometric grid for this scale and discretize the coordinate values within the corresponding band. For a candidate revenue $\rho$, the fractional objective linearizes to $\sum_i(r_i-\rho)w_i\ge\rho$; after the scale is fixed, selecting zero or one value for each product becomes a multiple-choice knapsack problem under the rank budget. Standard profit scaling gives a polynomial dynamic program. All selected positive values remain in a common balance band, the rank budget is enforced exactly, and Theorem~\ref{thm:sales} reconstructs a legal policy supported on $O(n)$ assortments. Approximation therefore affects only revenue, not BMS or per-assortment cardinality.

This paper is organized as follows. Section~\ref{sec:related} reviews related work, and Section~\ref{sec:model} gives the formal model. Section~\ref{sec:sales} proves the exact compact formulation and $O(n)$-support reconstruction theorem. Section~\ref{sec:exact} establishes the sharp computational frontier: prescribed-support optimization is polynomial-time solvable, the unrestricted two-threshold structure fails under the rank constraint, and the revenue-threshold decision problem is NP-complete even for $\alpha=1$ and $K=2$. Section~\ref{sec:fptas} develops the exact-feasibility FPTAS, and Section~\ref{sec:conclusion} concludes. The certificate for the threshold counterexample appears after the references.

\section{Related work}\label{sec:related}

El Housni, Feng, and Topaloglu introduce BMS, solve its unrestricted static version, and explicitly distinguish restrictions on the aggregate active catalog from restrictions on every randomized assortment~\cite{EFTopaloglu}.  Our problem is the cardinality-constrained case of the latter open model, and we provide its complete algorithmic resolution.  Chen, Golrezaei, and Susan study randomized cardinality-constrained MNL assortments under exogenous pairwise-difference constraints on item outcomes~\cite{ChenGolrezaeiSusan}.  Lu, Sahin, and Wang study randomized fair assortment planning with minimum-exposure constraints and also consider cardinality restrictions on the assortments~\cite{LuSahinWang}.  Zhu, Rusmevichientong, and Topaloglu allow a product class to be inactive or require its aggregate market share to meet a class-specific absolute threshold~\cite{ZhuRusTop}.  In contrast, our BMS constraint is product-level and endogenous: every positive market share is compared with the largest positive market share.

Ordinary cardinality-constrained MNL assortment optimization is polynomial-time solvable~\cite{RusShenShmoys}.  Cardinality combined with visibility requirements can make related assortment problems strongly NP-hard~\cite{BarreEtAl}, although that model and its approximation boundary differ from aggregate relative balance.  More broadly, assortment optimization has been studied under general regular discrete-choice models~\cite{BerbegliaJoret} and under combinatorial bundle-choice valuation models~\cite{ImmorlicaEtAl}.

The individual ingredients have classical antecedents: the MNL sales-space viewpoint underlies unrestricted BMS~\cite{EFTopaloglu}, the marginal polytope in Theorem~\ref{thm:sales} is the uniform-matroid independence polytope~\cite{Schrijver}, prescribed-support optimization uses linear-fractional programming~\cite{CharnesCooper}, and the FPTAS uses profit scaling for multiple-choice knapsack~\cite{KellererEtAl}.  Our contribution is their exact synthesis. Under the denominator tilt, per-assortment cardinality becomes a single uniform-matroid rank inequality; the remaining endogenous support choice makes the revenue-threshold decision problem NP-complete, yet BMS retains a common-scale geometry strong enough to yield an exact-feasibility FPTAS and an $O(n)$-support implementation.

\section{Model and optimization problem}\label{sec:model}
We use the standard MNL assortment model~\cite{RusShenShmoys,TalluriVanRyzin}. There are $n$ products, collected in $N=[n]=\{1,\ldots,n\}$. Product $i$ has revenue $r_i>0$ and attractiveness weight $v_i>0$. The no-purchase option, indexed by $0$, has revenue zero and weight $v_0=1$. When the seller displays an assortment $S\subseteq N$, write
\[
 d(S)=1+\sum_{i\in S}v_i.
\]
The customer purchases product $i$ or chooses the no-purchase option with probabilities
\[
 \phi_i(S)=\frac{v_i\mathbf 1[i\in S]}{d(S)},\qquad
 \phi_0(S)=\frac1{d(S)}.
\]
Here $\mathbf 1[\cdot]$ denotes the indicator of an event.

Fix a display limit $K\in[n]$, and let $\mathcal S_K=\{S\subseteq N:|S|\le K\}$. A randomized policy is a distribution $q=(q_S)_{S\in\mathcal S_K}$, where $q_S\ge0$ and $\sum_{S\in\mathcal S_K}q_S=1$. For each customer, the seller draws $S\sim q$ and displays that assortment. Thus the cardinality bound applies to every realized display, while the union of products used across displays may be larger than $K$.

The policy induces aggregate purchase probabilities
\[
 x_i=x_i(q)=\sum_{S\in\mathcal S_K}q_S\phi_i(S)\quad(i\in N),
 \qquad
 x_0=\sum_{S\in\mathcal S_K}q_S\phi_0(S),
\]
with $x_0+\sum_i x_i=1$. Its expected revenue is
\[
 R(q)=\sum_{i=1}^n r_i x_i.
\]
This aggregate-sales viewpoint follows the sales-based formulations in~\cite{GallegoRatliffShebalov,Topaloglu}.

Following~\cite{EFTopaloglu}, fix a balance parameter $\alpha\in(0,1]$. The active set is
\[
 A(q)=\{i\in N:x_i(q)>0\}
     =\bigcup_{S:q_S>0}S.
\]
A policy is \emph{BMS-feasible} if every active product receives at least an $\alpha$ fraction of the largest aggregate purchase probability, equivalently,
\[
 x_i=0\quad\text{or}\quad
 x_i\ge \alpha\max_{j\in N}x_j
 \qquad(i\in N).
\]
The optimization problem is to compute
\[
 \OPT=\max\{R(q):q\text{ is a distribution on }\mathcal S_K
                         \text{ and is BMS-feasible}\}.
\]

For computational statements, $r_i$, $v_i$, and $\alpha$ are binary-encoded positive rationals and $K$ is an integer. The decision version additionally receives a rational threshold $H$ and asks whether $\OPT\ge H$. An output policy is represented by listing its positive-support assortments and their rational probabilities; the approximation algorithm must satisfy both BMS and the display limit exactly.

\section{An exact attainable-sales polytope}\label{sec:sales}
We prove Theorem~\ref{thm:sales} by first mapping a policy to compact sales variables and then reversing the map. Start with a feasible policy $q$ and define
\[
 x_0=\sum_S\frac{q_S}{d(S)},\qquad w_i=\frac{x_i}{x_0},\qquad p_S=\frac{q_S}{d(S)x_0}.
\]
Then $\sum_Sp_S=1$, and
\[
 z_i:=\frac{w_i}{v_i}=\sum_{S:i\in S}p_S=\Pr_{S\sim p}(i\in S).
\]
Thus $z$ belongs to the independence polytope of the rank-$K$ uniform matroid:
\[
 0\le z_i\le1,\qquad \sum_i z_i\le K.
\]
This gives \eqref{eq:caps} and \eqref{eq:rank}. Since
\[
 x_0=\frac1{1+\sum_iw_i},\qquad x_i=\frac{w_i}{1+\sum_jw_j},
\]
BMS is equivalent to \eqref{eq:bmsw}, and revenue becomes \eqref{eq:obj}.

For the reverse direction, let $w$ satisfy \eqref{eq:caps} and \eqref{eq:rank}, put $z_i=w_i/v_i$, and express $z$ as a convex combination
\[
 z=\sum_Sp_S\mathbf 1_S,\qquad |S|\le K,
\]
where $\mathbf 1_S$ denotes the incidence vector of $S$.
Define
\[
 q_S=\frac{p_S(1+\sum_{i\in S}v_i)}{1+\sum_iw_i}.
\]
Then $\sum_Sq_S=1$, and direct substitution recovers the stated aggregate probabilities.

It remains to obtain the stated support bound constructively. Add at most $K$ dummy coordinates so that the augmented vector has sum $K$, and arrange intervals of lengths $z_i$ on $[0,K)$. For $U$ uniform on $[0,1)$, select the labels covering $U,U+1,\ldots,U+K-1$, discarding dummy labels. Each label is selected with marginal probability $z_i$, labels are distinct, and the selected set has size at most $K$. The set changes only at fractional parts of cumulative interval endpoints, yielding $O(n)$ rational support points. Together with the reverse tilt above, this proves both the exact equivalence and the implementation guarantee.

\section{Exact optimization reduces to support selection}\label{sec:exact}
The compact formulation removes the exponential policy space but leaves the endogenous choice of which products receive positive sales. We first show that optimization is polynomial-time solvable once the exact active set $\supp(w)$ is prescribed. We then turn to the endogenous-support barrier: the threshold structure of unrestricted BMS does not survive the rank constraint, optimizing over the active set is NP-hard, and the associated revenue-threshold decision problem is NP-complete.

\subsection{A prescribed support is tractable}\label{sec:fixedsupport}

\begin{proposition}[Prescribed-support optimization]\label{prop:fixedsupport}
For every prescribed nonempty active set $A\subseteq N$, the maximum revenue among feasible vectors with $\supp(w)=A$ can be computed in polynomial time.
\end{proposition}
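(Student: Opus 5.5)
By Theorem~\ref{thm:sales} it suffices to work in the compact $w$-space. Fix a nonempty $A\subseteq N$ and impose $w_i=0$ for $i\notin A$. For $i,j\in A$ the BMS disjunction \eqref{eq:bmsw} then has to hold in its second branch, and it becomes the family of linear inequalities $w_i\ge\alpha w_j$. Together with $w_i\le v_i$ and $\sum_{i\in A}w_i/v_i\le K$, these define a polytope $P_A$ whose description has polynomial size. Since $\alpha>0$, any $w\in P_A$ with one positive coordinate in $A$ has all coordinates in $A$ positive. So the feasible vectors with support exactly $A$ are precisely $P_A\setminus\{0\}$.

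We then maximize the linear-fractional objective $F(w)=\sum_{i\in A}r_iw_i/(1+\sum_{i\in A}w_i)$ over $P_A$. The denominator is at least $1$ on $P_A$, so we can apply the Charnes--Cooper substitution
\[
 t=\frac{1}{1+\sum_{i\in A}w_i},\qquad y=tw .
\]
This turns the problem into the LP
\[
 \max\ \sum_{i\in A}r_iy_i\quad\text{s.t.}\quad t+\sum_{i\in A}y_i=1,\quad y_i\le v_it,\quad \sum_{i\in A}y_i/v_i\le Kt,\quad y_i\ge\alpha y_j\ (i,j\in A),\quad y_j=0\ (j\notin A),\quad t\ge0 .
\]
The homogenized constraints have the standard form. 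A feasible LP point with $t=0$ would force $y\le 0$ through $y_i\le v_it$, and then $t+\sum_i y_i=1$ could not hold. So $t>0$ at every feasible point, and $w=y/t$ is a bijection that preserves the objective. The LP has polynomially many constraints with rational data, so the ellipsoid or an interior-point method solves it exactly in polynomial time and returns a rational optimal vertex.

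The main obstacle is the strictness of the support condition, since $w=0$ also lies in $P_A$. This is harmless. The objective is scale-equivariant along rays: for $w\in P_A\setminus\{0\}$ and $s\in(0,1]$ we have $sw\in P_A$, and since $r_i>0$, $F(sw)$ is strictly increasing in $s$. Hence the supremum over $P_A\setminus\{0\}$ equals the maximum over $P_A$, and it is attained at a nonzero point as soon as $P_A\ne\{0\}$. We still need $P_A\ne\{0\}$. For this take the point $w_i=c$ for all $i\in A$, with $c=\min\bigl(\min_{i\in A}v_i,\;K/\sum_{i\in A}1/v_i\bigr)>0$. It satisfies all the constraints, so the LP optimum is positive and is attained at a nonzero $w$ with support exactly $A$. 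The value of this LP is the required maximum, and Theorem~\ref{thm:sales} converts the optimizer into a policy on $O(n)$ assortments.
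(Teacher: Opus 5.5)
Your proof is correct and follows the paper's argument essentially step for step. You linearize BMS on the fixed support as $w_i\ge\alpha w_j$, apply the Charnes--Cooper transformation to get a polynomial-size rational LP, and use a positive feasible point plus the pairwise inequalities to show the optimum has support exactly $A$; your explicit balanced point $w_i=c$ and your check that $t>0$ are slightly more careful than the paper's ``sufficiently small positive vector.'' One thing to tidy: list the nonnegativity constraints $w_i\ge0$ (and $y_i\ge0$ in the LP) from \eqref{eq:caps} explicitly, since for $\alpha=1$ the pairwise inequalities do not imply them and your claim that the denominator is at least $1$ on $P_A$ relies on them.
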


\begin{proof}
Set $w_i=0$ for $i\notin A$. On $A$, balanced market share is represented by the linear inequalities
\[
 w_i\ge \alpha w_j\qquad(i,j\in A).
\]
Together with
\[
 0\le w_i\le v_i,\qquad \sum_{i\in A}\frac{w_i}{v_i}\le K,
\]
these define a bounded rational polytope. The objective is linear-fractional. Introduce the Charnes--Cooper variables
\[
 s=\frac1{1+\sum_{i\in A}w_i},\qquad y_i=sw_i.
\]
The problem becomes the rational linear program
\begin{align*}
 \max\quad &\sum_{i\in A}r_iy_i\\
 \text{s.t.}\quad &s+\sum_{i\in A}y_i=1,\\
 &0\le y_i\le v_is &&(i\in A),\\
 &\sum_{i\in A}\frac{y_i}{v_i}\le Ks,\\
 &y_i\ge \alpha y_j &&(i,j\in A),\\
 &s\ge0.
\end{align*}
This linear program has polynomial size and rational data, so it is solvable in polynomial time. A sufficiently small vector that is positive on every coordinate in $A$ is feasible and has positive revenue. Hence an optimum of the linear program is nonzero, and the pairwise inequalities force every coordinate in $A$ to be positive. The linear program therefore solves the exact-active-set problem.
\end{proof}

\begin{corollary}[Closed-form fixed-support value for $\alpha=1$]\label{cor:alpha1fixed}
Suppose $\alpha=1$, and fix a nonempty active set $A$. Then all positive coordinates are equal, $w_i=t$ for $i\in A$. The largest feasible common value is
\[
 t_A=\min\left\{\min_{i\in A}v_i,\ \frac{K}{\sum_{i\in A}1/v_i}\right\},
\]
and the optimal revenue supported on $A$ is
\[
 F_A=\frac{t_A\sum_{i\in A}r_i}{1+|A|t_A}.
\]
\end{corollary}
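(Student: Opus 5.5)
The plan is to specialize the compact formulation of Theorem~\ref{thm:sales} to $\alpha=1$ with the active set fixed to $A$, and then to check that the resulting one-variable problem is monotone. Set $w_i=0$ for $i\notin A$. With $\alpha=1$, condition \eqref{eq:bmsw} requires every positive coordinate to satisfy $w_i\ge\max_j w_j$. Hence all coordinates in $A$ equal $\max_j w_j$, and we may write $w_i=t>0$ for $i\in A$. This gives the first claim, that all positive coordinates are equal.

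Next we substitute $w=t\mathbf 1_A$ into the constraints. The caps \eqref{eq:caps} become $t\le v_i$ for each $i\in A$, that is, $t\le\min_{i\in A}v_i$. The rank inequality \eqref{eq:rank} becomes $t\sum_{i\in A}1/v_i\le K$. Condition \eqref{eq:bmsw} holds automatically for every $t>0$. The feasible set of common values is therefore exactly the interval $(0,t_A]$ with $t_A$ as stated. Note that $t_A>0$ because all $v_i>0$ and $K\ge1$.

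Substituting into \eqref{eq:obj} gives the one-variable function
\[
 F(t\mathbf 1_A)=\frac{t\sum_{i\in A}r_i}{1+|A|t}=\Bigl(\sum_{i\in A}r_i\Bigr)\cdot\frac{t}{1+|A|t}.
\]
The map $t\mapsto t/(1+|A|t)$ has derivative $1/(1+|A|t)^2>0$, so it is strictly increasing on $t>0$. The prefactor $\sum_{i\in A}r_i$ is positive. The maximum over $(0,t_A]$ is therefore attained at $t=t_A$, and its value is $F_A$. By the equivalence in Theorem~\ref{thm:sales}, $F_A$ is the optimal revenue over policies whose active set is $A$.

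No step here is genuinely difficult. The only point needing care is the bookkeeping about the support: the value $t_A$ must be strictly positive, so that the maximizer has support exactly $A$ rather than a proper subset, and this is the observation $t_A>0$ made above.
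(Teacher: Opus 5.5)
Your proof is correct and follows essentially the same route as the paper: with $\alpha=1$, BMS forces a common value $t$ on $A$; the caps and the rank inequality give $t\le t_A$; and since $t\sum_{i\in A}r_i/(1+|A|t)$ is increasing in $t$, the optimum is at $t=t_A$. Your remark that $t_A>0$, so that the maximizer has support exactly $A$, is a small detail the paper leaves implicit.
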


\begin{proof}
When $\alpha=1$, the BMS inequalities imply $w_i=w_j$ for all $i,j\in A$. Writing their common value as $t$, the coordinate caps require $t\le\min_{i\in A}v_i$, while the rank constraint requires
\[
 t\sum_{i\in A}\frac1{v_i}\le K.
\]
Thus $t\le t_A$. For fixed $A$,
\[
 f_A(t)=\frac{t\sum_{i\in A}r_i}{1+|A|t},\qquad
 f_A'(t)=\frac{\sum_{i\in A}r_i}{(1+|A|t)^2}>0,
\]
so the largest feasible $t$ is optimal.
\end{proof}

\subsection{Threshold failure and NP-completeness}\label{sec:hardness}
The unrestricted BMS model admits an optimal active set described by one revenue threshold and one attractiveness threshold~\cite{EFTopaloglu}. In the $(r_i,v_i)$ plane, such a set is a northeast rectangle, namely a set of the form $\{i:r_i\ge\bar r,\ v_i\ge\bar v\}$ for some thresholds $\bar r$ and $\bar v$. The rank budget $\sum_iw_i/v_i\le K$ couples the products and destroys this geometry.

\begin{proposition}[Nonrectangular optimal support]\label{prop:nonrectangular}
There is a rational three-product instance with $K=2$ whose unique optimal active set is not a northeast rectangle in revenue and attractiveness.
\end{proposition}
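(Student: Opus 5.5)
We will prove Proposition~\ref{prop:nonrectangular} by exhibiting an explicit instance and certifying it by exhaustive enumeration. We take $\alpha=1$ and $K=2$, so Corollary~\ref{cor:alpha1fixed} gives every fixed-support optimum in closed form:
\[
 F_A=\frac{t_A\sum_{i\in A}r_i}{1+|A|t_A},\qquad t_A=\min\Bigl\{\min_{i\in A}v_i,\ \frac{2}{\sum_{i\in A}1/v_i}\Bigr\}.
\]
With three products there are only seven nonempty supports. Proving the proposition therefore amounts to two checks: compute these seven rationals and confirm that one support $A^\star$ is the strict maximizer, then confirm that $A^\star$ is not of the form $\{i:r_i\ge\bar r,\ v_i\ge\bar v\}$. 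The empty support has revenue $0$ and is dominated, so it needs no separate treatment.

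The non-rectangularity condition has a simple combinatorial form. The smallest northeast rectangle containing $A^\star$ uses the thresholds $\bar r=\min_{i\in A^\star}r_i$ and $\bar v=\min_{i\in A^\star}v_i$. Hence $A^\star$ fails to be a rectangle exactly when some excluded product $k\notin A^\star$ satisfies $r_k\ge\bar r$ and $v_k\ge\bar v$. We therefore aim for $A^\star=\{1,2\}$, with product $3$ dominating this componentwise minimum but still excluded from the optimum. The useful intuition comes from the linearization used later: at a fixed common value $t$, adding a product $k$ lowers $F$ precisely when $r_k<F_A$. So product $3$ should have revenue between $r_2$ and $F_{\{1,2\}}$.

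The extra lever that the unrestricted model lacks is the rank budget. When $|A|=3$ and $K=2$, the rank constraint forces $t_{\{1,2,3\}}\le 2/\sum_i(1/v_i)$, which can sit strictly below $\min_i v_i$, so including product $3$ shrinks the common scale for everyone. We intend to choose the data as follows:
\begin{enumerate}[label=(\roman*)]
 \item products $1$ and $2$ have fairly large $v$, so that the pair beats both singletons, which requires roughly $r_2(1+v)>r_1v$ when the $v$'s are equal;
 \item product $3$ has $v_3\ge\min(v_1,v_2)$ and $r_3\ge r_2$, but adding it makes the rank constraint bind hard enough that $F_{\{1,2,3\}}<F_{\{1,2\}}$;
 \item $F_{\{1,3\}}$ and $F_{\{2,3\}}$ fall below $F_{\{1,2\}}$. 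This is where we would make $v_1\ne v_2$: in particular, give product $1$ a small $v_1$ together with a high $r_1$. Then product $3$ only needs $v_3\ge v_1$, and the pairs containing $3$ lose because product $3$'s revenue or cap is weaker than product $2$'s in the relevant direction.
\end{enumerate}

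The main obstacle is finding parameters that satisfy all these inequalities at once. A first attempt with equal $v$'s shows the tension: the pair-over-singleton condition pushes $r_2$ up toward $r_1$, while exclusion of product $3$ needs $r_2\le r_3<F_{\{1,2\}}$. Our plan is to treat the seven values $F_A$ as explicit rational functions of $(r_i,v_i)$. We would fix a small-cap, high-revenue product $1$ and a large-cap product $2$, then search over small integers for $(r_3,v_3)$ until all six strict inequalities $F_{A^\star}>F_A$ hold.

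If no $\alpha=1$ instance turns up, we would fall back to some $\alpha<1$ and certify each support's value via the linear program of Proposition~\ref{prop:fixedsupport}, reporting the optimal $y,s$ together with dual solutions. Once the numbers are found, the proof is a finite table of seven rational values plus the one-line domination check, and this table is the certificate the paper places after the references.
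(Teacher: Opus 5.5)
Your proposal does not prove the proposition, for two reasons. First, the statement is an existence claim, so its proof is a concrete instance together with a certificate, and you never supply one. Second, your primary search cannot succeed: no instance with $\alpha=1$, $K=2$ and three products exists.

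To see why, note that for $|A|\le 2=K$ the rank constraint is automatically slack, since $\sum_{i\in A}w_i/v_i\le|A|$. Hence $t_{\{i,j\}}=\min\{v_i,v_j\}$ and $F_{\{i,j\}}=g(\min\{v_i,v_j\})(r_i+r_j)$, where $g(t)=t/(1+2t)$ is increasing. Suppose $v_i\le v_j$ and the excluded product $k$ has $r_k\ge\min\{r_i,r_j\}$ and $v_k\ge v_i$. Then $F_{\{i,k\}}=g(v_i)(r_i+r_k)$ and $F_{\{j,k\}}\ge g(v_i)(r_j+r_k)$. Strict optimality of $\{i,j\}$ would therefore force $r_k<\min\{r_i,r_j\}$, a contradiction.

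In your design ($v_1$ smallest, $r_3\ge r_2$, $v_3\ge v_1$) this is simply $F_{\{1,3\}}\ge F_{\{1,2\}}$. Product 1's cap pins the common value at $v_1$, so product 3's larger cap is useless. Your target window $r_2\le r_3<F_{\{1,2\}}$ is also empty. At common value $v_1$, $\{1,2\}$ beats $\{1\}$ only if $r_2>F_{\{1\}}$, and then $F_{\{1,2\}}$, a mediant of $F_{\{1\}}$ and $r_2$, lies strictly below $r_2$.

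The other supports cannot help either. A singleton is never a non-rectangular unique optimum, because a dominating $k$ gives $F_{\{k\}}\ge F_{\{i\}}$. The full set is always a rectangle. So $\alpha<1$ is forced, and the tension you noticed is an impossibility, not a search difficulty.

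Your fallback to $\alpha<1$ with per-support linear-program certificates is the right direction, and it is what the paper does. The idea you are missing is the mechanism that $\alpha<1$ supplies. Unequal coordinates let a high-$v$, lower-revenue product carry up to $1/\alpha$ times its partner's sales. An excluded ``middling'' product with a smaller cap cannot replicate this, while the rank budget blocks simply adding it.

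The paper uses $\alpha=1/6$, $K=2$, $r=(65,80,64)$ and $v=(3,2,14)$. The optimum is $\{2,3\}$ with $w_2=v_2=2$ and $w_3=w_2/\alpha=12$, giving value $928/15$. Product 1 dominates $(\min_{A}r,\min_{A}v)=(64,2)$ yet is excluded:
\begin{itemize}
\item replacing product 3 by product 1 loses the large cap, giving $355/6$;
\item replacing product 2 by product 1 loses revenue, giving $1091/18$;
\item adding product 1 makes the rank budget bind.
\end{itemize}
The full support is ruled out by the certificate $\sum_i(r_i-\rho)w_i\le\rho$ with $\rho=11989/195$. This is a nonnegative combination of the rank constraint, the BMS inequality $w_3\le 6w_1$, and the cap $w_2\le 2$.

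The proposition is established only once you exhibit such data and verify all seven support values strictly.
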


Consider $\alpha=1/6$, $K=2$, and
\[
\begin{array}{c|ccc}
 i&1&2&3\\\hline
 r_i&65&80&64\\
 v_i&3&2&14
\end{array}
\]
The unique optimum has active set $\{2,3\}$, with $w_2=2$, $w_3=12$, and value $928/15$. Any northeast rectangle containing products 2 and 3 must also contain product 1. Appendix~\ref{app:counterexample} gives the exact values of all competing supports and a dual certificate for the full-support optimum.

The counterexample above rules out the threshold shortcut available in the unrestricted model. We now show that the remaining combinatorial choice of $A$ is NP-hard even when balance and cardinality are fixed at their simplest nontrivial values.
\begin{proof}[Proof of Theorem~\ref{thm:hardness}]
The case $H\le0$ is trivial. For $H>0$, membership in NP follows by guessing the active set $A$, imposing $w_i=0$ for $i\notin A$, the pairwise BMS inequalities on $A$, and
\[
 \sum_i(r_i-H)w_i\ge H.
\]
This is a bounded rational linear system and, if feasible, has a polynomial-bit vertex certificate.

For hardness, take a positive-integer Subset Sum instance with items $a_1,\ldots,a_m\in\mathbb Z_{>0}$ and target $B\in\mathbb Z_{>0}$, asking whether some subset sums to $B$. Delete every item $a_i>B$. If no item remains, output the fixed legal no-instance
\[
 n=2,\quad r_1=r_2=v_1=v_2=1,\quad \alpha=1,\quad K=2,\quad H=2.
\]
Otherwise set
\[
 \alpha=1,\qquad K=2,\qquad H=3B.
\]
Create an anchor product $\star$ and one product per remaining integer:
\[
 v_\star=1,\quad r_\star=5B,\qquad v_i=\frac{B}{a_i},\quad r_i=3B+a_i.
\]
When $\alpha=1$, all positive $w_i$'s equal a common value $t$. Let $T$ be the active item set, write $a(T):=\sum_{i\in T}a_i$, and let $\delta_\star\in\{0,1\}$ indicate whether the anchor is active. Revenue at least $3B$ is equivalent to
\begin{equation*}
 t(2B\delta_\star+a(T))\ge3B.\qquad(\ast)
\end{equation*}
Without the anchor, \eqref{eq:rank} implies $t a(T)/B\le2$, so the left-hand side of $(\ast)$ is at most $2B$.

With the anchor, $t\le1$ and
\[
 t(1+a(T)/B)\le2,\qquad t\le\min\{1,2B/(B+a(T))\}.
\]
If $a(T)<B$, even $t=1$ fails $(\ast)$. If $a(T)=B$, $t=1$ is feasible and attains equality. If $a(T)>B$, the largest possible left-hand side is
\[
 \frac{2B(2B+a(T))}{B+a(T)}<3B.
\]
Thus the constructed instance is a yes-instance exactly when some subset sums to $B$.

For completeness, when $T$ sums to $B$, the original policy can randomize over the pairs $\{\star,i\}$ using
\[
 q_{\star,i}=\frac{B+2a_i}{B(|T|+2)}.
\]
These probabilities sum to one and yield $x_0=x_\star=x_i=1/(|T|+2)$ for $i\in T$, with revenue $3B$.
\end{proof}

Corollary~\ref{cor:alpha1fixed} makes the source of hardness especially transparent: every prescribed support has an explicit value, yet selecting the support of maximum value remains NP-hard.

\section{An exact-feasibility FPTAS}\label{sec:fptas}
Despite NP-hardness, BMS still forces every positive coordinate to lie near one common scale. We first localize and discretize this scale. We then reduce revenue testing to multiple-choice knapsack and combine the two ingredients into the FPTAS.

\subsection{Locating and discretizing the balance scale}
Fix $\varepsilon\in(0,1)$ and put $\delta=\varepsilon/10$. Let $v_{\min}=\min_i v_i$, $v_{\max}=\max_i v_i$, $r_{\max}=\max_i r_i$, and let
\[
 L=\max_i\frac{r_iv_i}{1+v_i}
\]
be the best singleton revenue.

\begin{lemma}[Scale range]\label{lem:scale}
There is an optimum $w^\star$ for which, with $\tau^\star=\alpha\max_iw_i^\star$,
\[
 \frac{\alpha v_{\min}}n\le\tau^\star\le\alpha v_{\max}.
\]
\end{lemma}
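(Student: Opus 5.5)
We prove the two inequalities separately. The upper bound holds for every feasible $w$, while the lower bound requires choosing an optimum carefully. Throughout we work in the compact formulation of Theorem~\ref{thm:sales}, so an optimum of the policy problem corresponds to an optimum $w^\star$ of \eqref{eq:obj}--\eqref{eq:bmsw}.

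\textbf{Existence of an optimum.} The feasible region is a finite union, over supports $A$, of the sets $\{w:\supp(w)=A\}$ cut out by \eqref{eq:caps}, \eqref{eq:rank} and \eqref{eq:bmsw}. Each such set lies in a bounded region, and Proposition~\ref{prop:fixedsupport} shows that its supremum is attained. Hence a global optimum exists. It is nonzero, because any singleton $w_i=\min\{v_i,K\}>0$ gives positive revenue.

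\textbf{Upper bound.} By \eqref{eq:caps}, every feasible $w$ satisfies $\max_i w_i\le v_{\max}$. Multiplying by $\alpha$ gives $\tau^\star\le \alpha v_{\max}$.

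\textbf{Lower bound.} We must show $\max_i w_i^\star\ge v_{\min}/n$. This is the main obstacle, since an optimum could a priori use very small values. The natural approach is to argue that $F$ is increasing along uniform scaling. Fix the optimal support $A$ and consider the ray $w(\lambda)=\lambda w^\star$. Along this ray we have
\[
F(\lambda w^\star)=\frac{\lambda\sum_i r_iw_i^\star}{1+\lambda\sum_iw_i^\star},
\]
which is strictly increasing in $\lambda$ whenever the numerator is positive. Scaling preserves the ratios $w_i/w_j$, so \eqref{eq:bmsw} continues to hold. The caps \eqref{eq:caps} and the rank constraint \eqref{eq:rank} are preserved as long as $\lambda$ is not too large. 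Consequently, at an optimum one of these constraints must be tight. Otherwise a slightly larger $\lambda$ would be feasible and strictly improve $F$, contradicting optimality.

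We then treat the two possible tight constraints in turn.

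\emph{Case 1: some cap is tight.} Then $w_i^\star=v_i$ for some $i$, so $\max_j w_j^\star\ge v_i\ge v_{\min}$.

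\emph{Case 2: the rank constraint is tight.} Then $\sum_i w_i^\star/v_i=K\ge1$. The left-hand side has at most $n$ terms, and each term is at most $\max_j w_j^\star/v_{\min}$. Therefore $n\max_j w_j^\star/v_{\min}\ge 1$, which gives $\max_j w_j^\star\ge v_{\min}/n$.

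In either case $\max_j w_j^\star\ge v_{\min}/n$. Multiplying by $\alpha$ yields $\tau^\star\ge \alpha v_{\min}/n$.

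The scaling argument in fact shows that every optimum has this property, so the statement's "there is an optimum" holds a fortiori. The only delicate point is positivity of the numerator $\sum_i r_iw_i^\star$, which is needed for strict monotonicity. It follows from $r_i>0$ and $w^\star\ne0$.
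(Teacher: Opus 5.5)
Your proof is correct and follows the paper's route. Scaling a nonzero optimum forces a cap or the rank budget to be tight, which gives $\max_i w_i^\star\ge v_{\min}$ or $\max_i w_i^\star\ge Kv_{\min}/n\ge v_{\min}/n$, and the upper bound comes from the caps. Your checks that an optimum exists (via Proposition~\ref{prop:fixedsupport} on each support) and that the numerator is positive fill in details the paper leaves implicit.
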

\begin{proof}
At an optimum, some cap or the rank budget is tight; otherwise all positive coordinates can be scaled up while preserving BMS and strictly increasing the fractional objective. If a cap is tight, $\max_iw_i^\star\ge v_{\min}$. If the budget is tight, then
\[
 K=\sum_iw_i^\star/v_i\le n\max_iw_i^\star/v_{\min},
\]
so $\max_iw_i^\star\ge Kv_{\min}/n\ge v_{\min}/n$. The upper bound follows from the caps.
\end{proof}

Use the scale grid
\[
 \mathcal T=\left\{\frac{\alpha v_{\min}}n(1+\delta)^j:\ j\in\mathbb Z_{\ge0},\ \frac{\alpha v_{\min}}n(1+\delta)^j\le\alpha v_{\max}\right\}.
\]
For $\tau\in\mathcal T$, product $i$ has the zero option and
\[
 \mathcal O_i(\tau)=\{\tau(1+\delta)^\ell:\ \ell\in\mathbb Z_{\ge0},\ \tau(1+\delta)^\ell\le\min\{v_i,\tau/\alpha\}\}.
\]
Every nonzero discrete coordinate lies in the same exact BMS band $[\tau,\tau/\alpha]$.

For $\tau\in\mathcal T$, define the corresponding discrete optimum by
\[
 F_D(\tau):=\max\left\{F(w):
 w_i\in\{0\}\cup\mathcal O_i(\tau)\ (i\in N),
 \sum_i\frac{w_i}{v_i}\le K\right\}.
\]

\begin{lemma}[Discretization]\label{lem:disc}
For some $\tau\in\mathcal T$, the discrete optimum satisfies
\[
 F_D(\tau)\ge\frac{\OPT}{(1+\delta)^2}.
\]
\end{lemma}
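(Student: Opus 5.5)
Given the optimum $w^\star$ from Lemma~\ref{lem:scale}, I would pick the grid scale just below $\tau^\star$ and round every positive coordinate of $w^\star$ down to the discrete value set $\mathcal O_i(\tau)$, losing at most a factor $(1+\delta)$ twice. Since OPT is attained by a nonzero vector (every $r_i>0$, and a small positive singleton is feasible), Lemma~\ref{lem:scale} gives $\frac{\alpha v_{\min}}{n}\le\tau^\star\le\alpha v_{\max}$. The smallest grid point $\frac{\alpha v_{\min}}{n}$ therefore lies in $\mathcal T$ and is at most $\tau^\star$. Let $\tau$ be the largest element of $\mathcal T$ with $\tau\le\tau^\star$. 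Because grid points are spaced by the factor $(1+\delta)$ and every grid point $\le\tau^\star\le\alpha v_{\max}$ is included in $\mathcal T$, we get $\tau\le\tau^\star<\tau(1+\delta)$.

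Next I round. By BMS, every $i\in\supp(w^\star)$ satisfies $\tau^\star\le w^\star_i\le\tau^\star/\alpha$. Set $u_i=\min\{w_i^\star,\tau/\alpha\}$, and let $w'_i$ be the largest value $\tau(1+\delta)^\ell$, $\ell\ge0$, with $\tau(1+\delta)^\ell\le u_i$. For $i\notin\supp(w^\star)$ set $w'_i=0$. I would check three facts in turn.
\begin{itemize}
\item $w'_i$ is well defined. Indeed $u_i\ge\tau$, because $w_i^\star\ge\tau^\star\ge\tau$ and $\tau/\alpha\ge\tau$.
\item $w'_i\in\mathcal O_i(\tau)$. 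This holds because $w'_i\le u_i\le\min\{v_i,\tau/\alpha\}$, using $w_i^\star\le v_i$.
\item $w'_i\ge w^\star_i/(1+\delta)^2$. Rounding down to the geometric grid gives $w'_i\ge u_i/(1+\delta)$. Moreover $\tau/\alpha>\tau^\star/((1+\delta)\alpha)\ge w_i^\star/(1+\delta)$, so $u_i\ge w_i^\star/(1+\delta)$.
\end{itemize}

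Since $w'\le w^\star$ coordinatewise, the rank constraint $\sum_i w'_i/v_i\le K$ is inherited, so $w'$ is feasible for $F_D(\tau)$. For the objective, the numerator satisfies $\sum_i r_iw'_i\ge(1+\delta)^{-2}\sum_i r_iw_i^\star$ because $r_i>0$. The denominator satisfies $1+\sum_iw'_i\le1+\sum_iw_i^\star$. Hence $F_D(\tau)\ge F(w')\ge F(w^\star)/(1+\delta)^2=\OPT/(1+\delta)^2$.

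The main subtlety is the cap at $\tau/\alpha$. The discrete band is tied to the grid scale $\tau$ rather than to $\tau^\star$, so the largest coordinates of $w^\star$ may exceed $\tau/\alpha$. The second $(1+\delta)$ factor comes exactly from truncating them, and that loss is controlled by $\tau^\star<\tau(1+\delta)$. Everything else is monotonicity of the fractional objective under coordinatewise decrease of the numerator by a bounded factor.
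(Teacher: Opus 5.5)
Your proof is correct and follows essentially the paper's route: take the largest grid point $\tau\le\tau^\star$, then lose one factor of $1+\delta$ in moving into the band $[\tau,\tau/\alpha]$ and another in rounding down to $\mathcal O_i(\tau)$, with caps and the rank budget inherited by monotonicity. The only difference is that the paper enters the band by scaling all of $w^\star$ by $\tau/\tau^\star$, whereas you truncate only the coordinates above $\tau/\alpha$ (the lower end $w_i^\star\ge\tau^\star\ge\tau$ being automatic). Both versions yield $w_i'\ge w_i^\star/(1+\delta)^2$ and the same bound on $F$.
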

\begin{proof}
Take the largest grid point $\tau\le\tau^\star$. First scale $w^\star$ by $\tau/\tau^\star$, losing at most a factor $1+\delta$. Then round every positive coordinate down to the largest option in $\mathcal O_i(\tau)$, losing less than another factor $1+\delta$. Caps and the rank budget only become easier, and all positive coordinates remain in the exact common BMS band.
\end{proof}

\subsection{Revenue testing and the FPTAS}
We use the revenue grid
\[
 \mathcal P=\{L(1+\delta)^j:j\in\mathbb Z_{\ge0},\ L(1+\delta)^j\le r_{\max}\}.
\]
For fixed $\tau\in\mathcal T$ and $\rho\in\mathcal P$, a discrete vector has revenue at least $\rho$ if and only if
\begin{equation}
 \sum_i(r_i-\rho)w_i\ge\rho.\label{eq:test}
\end{equation}
Each positive option $u\in\mathcal O_i(\tau)$ has transformed profit $\pi(i,u)=(r_i-\rho)u$ and weight $c(i,u)=u/v_i$. Discard positive options with nonpositive transformed profit, while always retaining the zero option. Choosing at most one positive option for each product is then a multiple-choice knapsack (MCKP) problem with capacity $K$.

If at least one positive option remains, let $\pi_{\max}$ be its largest transformed profit, set $\theta=\delta\pi_{\max}/n$, and define $\widehat\pi(i,u)=\lfloor\pi(i,u)/\theta\rfloor$. The dynamic program (DP) stores, for each total scaled profit, the minimum exact rational weight and one predecessor.

\begin{lemma}[MCKP approximation]\label{lem:mckp}
The profit-scaling DP returns actual transformed profit at least $(1-\delta)$ times the optimum of the corresponding MCKP.
\end{lemma}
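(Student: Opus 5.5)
The argument is the standard profit-scaling bound for knapsack-type DPs. Carrying it out here needs two points checked: the DP must be exact with respect to the scaled profits, and the total rounding loss must be at most $n\theta=\delta\pi_{\max}$.

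\textbf{Exactness of the DP.} For each total scaled profit $P$, the DP processes products one at a time. It records the minimum exact rational weight of a partial selection that picks at most one option per product (the zero option included) and has scaled profit exactly $P$. Scaled profits are nonnegative integers, each at most $\lfloor n/\delta\rfloor$ because $\pi(i,u)\le\pi_{\max}$. So every total lies in $\{0,\dots,n\lfloor n/\delta\rfloor\}$ and the table has polynomially many entries. A standard induction over products shows that each entry is the true minimum weight, since minimum weight is preserved under adding one option per stage. The DP finally returns the selection attaining the largest $P$ whose minimum weight is at most $K$. Consequently, every feasible MCKP solution has scaled profit at most that of the returned solution $\widehat x$.

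\textbf{Profit comparison.} Let $x^\ast$ be an optimal MCKP solution with actual transformed profit $\Pi^\ast$. Every retained positive option has positive profit, so $\Pi^\ast\ge\pi_{\max}$: the single option attaining $\pi_{\max}$ is feasible on its own, because $u\le v_i$ gives weight $u/v_i\le 1\le K$. For any selection, flooring loses less than $\theta$ per product, and at most $n$ products contribute. Hence
\[
\Pi(\widehat x)\ \ge\ \theta\,\widehat\Pi(\widehat x)\ \ge\ \theta\,\widehat\Pi(x^\ast)\ \ge\ \Pi(x^\ast)-n\theta=\Pi^\ast-\delta\pi_{\max}\ \ge\ (1-\delta)\Pi^\ast .
\]
Here $\widehat\Pi$ denotes total scaled profit; the first inequality uses $\theta\lfloor\pi/\theta\rfloor\le\pi$, and the second uses the DP optimality above.

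\textbf{Main obstacle.} The delicate point is the feasibility of the single best option, i.e.\ confirming $c(i,u)\le K$ so that $\Pi^\ast\ge\pi_{\max}$. It follows from the cap $u\le\min\{v_i,\tau/\alpha\}$ in $\mathcal O_i(\tau)$ together with $K\ge1$. A lesser point is that comparing weights exactly as rationals keeps the rank budget exact; bit sizes remain polynomial because every weight is a sum of at most $n$ grid values, each with polynomially bounded exponent. If no positive option survives, the zero vector is returned and the statement holds trivially.
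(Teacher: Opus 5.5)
Your proof is correct and follows the same route as the paper. The single best option is feasible by itself (weight $u/v_i\le 1\le K$), so $P^\star\ge\pi_{\max}$, and flooring loses at most $n\theta=\delta\pi_{\max}\le\delta P^\star$. You also spell out two steps the paper leaves implicit: that the min-weight DP is exact, and the inequality chain $\Pi(\widehat x)\ge\theta\widehat\Pi(\widehat x)\ge\theta\widehat\Pi(x^\ast)\ge\Pi^\ast-n\theta$.
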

\begin{proof}
Every individual option has weight at most one and is feasible because $K\ge1$; hence the MCKP optimum $P^\star\ge\pi_{\max}$. Rounding down each selected profit loses at most $\theta$, so at most $n\theta=\delta\pi_{\max}\le\delta P^\star$ in total. The number of scaled profit states is $O(n^2/\delta)$.
\end{proof}

Combining the two grids with this routine gives the complete algorithm. Algorithm~\ref{alg:fptas} enumerates the grids only after both have been constructed. For each pair $(\tau,\rho)$, it runs the profit-scaled knapsack routine and verifies the candidate using its unrounded transformed profit.

\begin{algorithm}[H]
\small
\caption{FPTAS for cardinality-constrained randomized BMS}
\label{alg:fptas}
\begin{algorithmic}[1]
\Require Rational data $(r_i,v_i)_{i\in N}$, $\alpha\in(0,1]$, $K\in[n]$, and rational $\varepsilon\in(0,1)$
\Ensure A rational BMS-feasible policy over assortments of size at most $K$
\State Set $\delta=\varepsilon/10$; compute $L,v_{\min},v_{\max},r_{\max}$ and construct $\mathcal T,\mathcal P$.
\State Set the best singleton sales vector as the incumbent $w^{\mathrm{inc}}$.
\For{$\tau\in\mathcal T$}
  \State Construct $\{0\}\cup\mathcal O_i(\tau)$ for every product $i$.
  \For{$\rho\in\mathcal P$}
    \State Form each option group $\{0\}\cup\{u\in\mathcal O_i(\tau):\pi(i,u)>0\}$.
    \State If no positive option remains, \textbf{continue}.
    \State Let $\pi_{\max}$ be the largest transformed profit among the remaining positive options.
    \State Set $\theta=\delta\pi_{\max}/n$ and $\widehat\pi(i,u)=\lfloor\pi(i,u)/\theta\rfloor$.
    \State Run the MCKP DP over the products, storing minimum exact weight and a predecessor for each total scaled profit.
    \State Recover the feasible state with largest scaled profit and its sales vector $\widetilde w$.
    \State If its actual transformed profit is at least $\rho$, set $w^{\mathrm{inc}}\gets\widetilde w$ if its actual revenue improves the incumbent.
  \EndFor
\EndFor
\State Convert $w^{\mathrm{inc}}$ to a policy $q$ supported on $O(n)$ assortments using Theorem~\ref{thm:sales}.
\State \Return $q$.
\end{algorithmic}
\end{algorithm}

\begin{proof}[Proof of Theorem~\ref{thm:fptas}]
Let $\ALG$ denote the revenue of the policy returned by Algorithm~\ref{alg:fptas}. Take the scale from Lemma~\ref{lem:disc} and abbreviate its discrete optimum by $F_D$. Put $\bar\rho=F_D/(1+2\delta)$. If $\bar\rho<L$, the singleton incumbent is already at least $F_D/(1+2\delta)$.

Otherwise choose the largest $\rho\in\mathcal P$ with $\rho\le\bar\rho$. Then
\[
 \rho>\frac{F_D}{(1+2\delta)(1+\delta)}.
\]
If $w^D$ attains $F_D$ and $W_D=\sum_iw_i^D$, its transformed profit is
\[
 \sum_i(r_i-\rho)w_i^D=F_D+(F_D-\rho)W_D\ge F_D\ge(1+2\delta)\rho.
\]
Deleting nonpositive-profit coordinates preserves the exact common BMS band and cannot reduce this value. By Lemma~\ref{lem:mckp}, the DP returns transformed profit at least
\[
 (1-\delta)(1+2\delta)\rho\ge\rho,
\]
so the vector passes \eqref{eq:test}. Therefore
\[
 \ALG\ge\frac{\OPT}{(1+\delta)^3(1+2\delta)}\ge(1-\varepsilon)\OPT,
\]
where the last inequality follows because
\[
 3\log(1+\delta)+\log(1+2\delta)
 \le5\delta=\frac\varepsilon2,
\]
and therefore $1/((1+\delta)^3(1+2\delta))\ge e^{-\varepsilon/2}\ge1-\varepsilon$.

The scale, option, and revenue grids have polynomial length and polynomial-bit rational entries. The DP uses $O(n^2/\delta)$ profit states per grid pair and keeps every exact option weight $c(i,u)=u/v_i$. All positive returned coordinates lie in one band $[\tau,\tau/\alpha]$, so BMS is exact, and the accepted total weight is at most $K$, so the rank constraint is exact. Finally, Theorem~\ref{thm:sales} reconstructs a legal policy supported on $O(n)$ assortments. Thus approximation affects only revenue, not either feasibility requirement.
\end{proof}

\section{Conclusion}\label{sec:conclusion}
For cardinality-constrained randomized MNL assortments, an exact denominator tilt collapses the exponential policy space to a compact sales-space formulation, and every feasible rational point can be implemented by a rational policy supported on $O(n)$ assortments. This resolves the open direction on representability and yields a sharp complexity separation: optimization is polynomial-time for a prescribed active set, whereas the revenue-threshold decision problem is NP-complete---even for $\alpha=1$ and $K=2$---when the active set is endogenous.

The same structure also explains the positive result. Although the active set is combinatorial, all positive coordinates lie in a single balance band. Geometric discretization of that band, combined with a standard profit-scaled dynamic program, yields an FPTAS without relaxing either BMS or per-realization cardinality. Extending this exact-collapse viewpoint beyond the uniform matroid is a natural next question.

\section*{Acknowledgments}
OpenAI Codex assisted with proof development and verification. The authors
independently checked all arguments and assume full responsibility for the
content of this manuscript.

\clearpage
\appendix
\section{Certificate for the nonrectangular optimum}\label{app:counterexample}
For the instance in Proposition~\ref{prop:nonrectangular}, the active set $\{2,3\}$ attains
\[
 w_2=2,\qquad w_3=12,\qquad F(w)=\frac{928}{15}.
\]
It is implemented, for example, by
\[
 q_{\{2,3\}}=\frac{34}{35},\qquad q_{\{2\}}=\frac1{35}.
\]
To certify global optimality and uniqueness, the best values on singleton and two-product supports are
\[
\begin{array}{c|cccccc}
 A&\{1\}&\{2\}&\{3\}&\{1,2\}&\{1,3\}&\{2,3\}\\\hline
 \max_{\supp(w)=A}F(w)&195/4&160/3&896/15&355/6&1091/18&928/15
\end{array}
\]
For full support, set $\rho=11989/195$. To certify $F(w)\le\rho$, it suffices to prove
\[
 686w_1+3611w_2+491w_3\le11989
\]
because this is exactly $\sum_i(r_i-\rho)w_i\le\rho$ after multiplication by $195$. The displayed inequality is obtained by adding the valid inequalities
\[
 4767\left(\frac{w_1}{3}+\frac{w_2}{2}+\frac{w_3}{14}\right)\le9534,
\]
\[
 \frac{301}{2}(w_3-6w_1)\le0,\qquad \frac{2455}{2}w_2\le2455.
\]
Equality holds at
\[
 (w_1,w_2,w_3)=\left(\frac{21}{16},2,\frac{63}{8}\right),
\]
so the full-support optimum is $11989/195<928/15$. This certifies the claimed global optimum and the failure of rectangular support structure.


\begin{thebibliography}{15}
\bibitem{BarreEtAl}
Th\'eo Barr\'e, Omar El Housni, Marouane Ibn Brahim, Andrea Lodi, and Danny Segev.
Assortment optimization with visibility constraints.
\emph{Mathematical Programming}, 216(1--2):177--220, 2026.

\bibitem{BerbegliaJoret}
Gerardo Berbeglia and Gwena\"el Joret.
Assortment optimisation under a general discrete choice model: A tight analysis of revenue-ordered assortments.
\emph{Algorithmica}, 82(4):681--720, 2020.

\bibitem{CharnesCooper}
Abraham Charnes and William W. Cooper.
Programming with linear fractional functionals.
\emph{Naval Research Logistics Quarterly}, 9(3--4):181--186, 1962.

\bibitem{ChenGolrezaeiSusan}
Qinyi Chen, Negin Golrezaei, and Fransisca Susan.
Fair assortment planning.
arXiv:2208.07341, 2022.

\bibitem{EFTopaloglu}
Omar El Housni, Qing Feng, and Huseyin Topaloglu.
Optimal selection with balanced market share: Static and dynamic assortment optimization.
arXiv:2507.05606, 2025.

\bibitem{GallegoRatliffShebalov}
Guillermo Gallego, Richard Ratliff, and Sergey Shebalov.
A general attraction model and sales-based linear program for network revenue management under customer choice.
\emph{Operations Research}, 63(1):212--232, 2015.

\bibitem{GareyJohnson}
Michael R. Garey and David S. Johnson.
\emph{Computers and Intractability: A Guide to the Theory of NP-Completeness}.
W. H. Freeman, 1979.

\bibitem{ImmorlicaEtAl}
Nicole Immorlica, Brendan Lucier, Jieming Mao, Vasilis Syrgkanis, and Christos Tzamos.
Combinatorial assortment optimization.
\emph{ACM Transactions on Economics and Computation}, 9(1):1--34, 2021.

\bibitem{KellererEtAl}
Hans Kellerer, Ulrich Pferschy, and David Pisinger.
\emph{Knapsack Problems}.
Springer, 2004.

\bibitem{LuSahinWang}
Wentao Lu, \"{O}zge \c{S}ahin, and Ruxian Wang.
A simple way towards fair assortment planning: Algorithms and welfare implications.
SSRN 4514495, 2023.

\bibitem{RusShenShmoys}
Paat Rusmevichientong, Zuo-Jun Max Shen, and David B. Shmoys.
Dynamic assortment optimization with a multinomial logit choice model and capacity constraint.
\emph{Operations Research}, 58(6):1666--1680, 2010.

\bibitem{Schrijver}
Alexander Schrijver.
\emph{Combinatorial Optimization: Polyhedra and Efficiency}.
Springer, 2003.

\bibitem{TalluriVanRyzin}
Kalyan Talluri and Garrett van Ryzin.
Revenue management under a general discrete choice model of consumer behavior.
\emph{Management Science}, 50(1):15--33, 2004.

\bibitem{Topaloglu}
Huseyin Topaloglu.
Joint stocking and product offer decisions under the multinomial logit model.
\emph{Production and Operations Management}, 22(5):1182--1199, 2013.

\bibitem{ZhuRusTop}
Wenchang Zhu, Paat Rusmevichientong, and Huseyin Topaloglu.
A unified framework to impose market share constraints for selected product classes: Randomized and deterministic assortments under the multinomial logit model.
\emph{Manufacturing \& Service Operations Management}, 28(1):172--192, 2026.
\end{thebibliography}
\end{document}